\newtheorem{theorem}{Theorem}
\newtheorem{lemma}[theorem]{Lemma}
\newtheorem{corollary}[theorem]{Corollary}
\theoremstyle{plain}
\newtheorem{claim}[theorem]{Claim}
\title{On Visibility Graphs of Convex Fans and Terrains}
\author{André C. Silva\thanks{andre.silva@ic.unicamp.br}}
\begin{document}

\maketitle

\begin{abstract}
	For two points in the closure of a simple polygon $P$, we say that they \textit{see} each other in $P$ if the line segment uniting them does not intersect the exterior of $P$. The \textit{visibility graph} of $P$ is the graph whose vertex set is the vertex set of $P$ and two vertices are joined by an edge if they see each other in $P$.
	The characterization of visibility graphs has been an open problem for decades, and a significant effort has been made to characterize visibility graphs of restricted polygon classes. Among them is the \textit{convex fan}: a simple polygon with a convex vertex (i.e. whose internal angle in less than 180 degrees) that sees every other point in the closure of the polygon (often called \textit{kernel vertex}). We show that visibility graphs of convex fans are equivalent to visibility graphs of terrains with the addition of a \textit{universal vertex}, that is, a vertex that is adjacent to every other vertex.
\end{abstract}

\section{Introduction}

We represent a polygon or polygonal chain $P$ as a as a finite cyclic sequence of points, its \textit{vertices}, in the plane. For any two consecutive vertices $v$ and $w$ of $P$, the segment $vw$ uniting $v$ and $w$ is one of its \textit{edges}. The union of all its edges is the \textit{boundary} of $P$. A polygon is \textit{simple} if its edges only intersect at their endpoints. The boundary of any simple polygon $P$ will enclose a path-connected bounded set of points in the plane. These points and its boundary form the \textit{closure} of $P$. Polygons here are assumed to be in \textit{general position}, which means no three vertices of the polygon are collinear.

Visibility graphs of polygons is a structure that naturally arises in many problems regarding visibility in polygons, such as the art gallery problem~\cite{agta} and euclidean shortest path problem~\cite{esp}. A complete characterization remains an open problem to this day. Naturally, restricting the problem to particular graph or polygon classes is a common strategy in dealing with it.

A class that got a lot of attention in the literature is the convex fans. The characterization of visibility graphs of convex fans has been the source of a lot of confusion in the visibility graph literature. It has been claimed a few times~\cite{hdpwa,vgsp,vgom}, however, each time, either the claim was erroneous or the results were never published. Since we can decompose a polygon into a sequence of convex fans, a characterization of their visibility graphs may help advance the general problem.

A convex fan is \textit{orthogonal} if all its edges are either vertical or horizontal. Orthogonal convex fans are also referred as \textit{staircase polygons}. It is clear that any vertex, besides the kernel vertex, with an internal angle of 90 degrees is only visible to the kernel vertex and the two other vertices that share an edge with it. Thus, it makes sense to consider only the \textit{core} of a staircase polygon, that is, all the vertices with an internal angle different from 90 degrees. 

Abello, Egecioglu and Kumar~\cite{vgsp} showed that the visibility graph of a staircase polygon induced by its core is \textit{persistent}, that is,  they satisfy the following properties\footnote{These properties were originally presented by Abello et al. as properties on the adjacency matrix of the graph. Evans and Saeedi~\cite{octvg} translated them into properties on the graph and renamed them. We use their version. }: (X-property) For any four vertices $a<b<c<d$, if $ac$ and $bd$ are edges in the graph, then $ad$ is also an edge; (Bar property) For any edge $ac$ with $c < a+1$, there exists a vertex $b$ adjacent to both $a$ and $c$ such that $a < b < c$. They also provided an algorithm to build a generalized configuration of points from a persistent ordered graph (i.e. whose vertices are totally ordered). Evans and Saeedi~\cite{octvg} simplified their proof, resulting in a faster algorithm.

Colley~\cite{rvgup} shows that the visibility graph of the core of a staircase polygon is equivalent to the visibility graph of a \textit{terrain}: an x-monotone polygonal chain. Note that terrains cannot be a closed curve (and thus, a polygon) which makes visibility a bit different. Two vertices of a terrain \textit{see} each other if and only if no vertex of the terrain is above or on the segment uniting them.

We expand upon the literature and reduce the characterization of visibility graphs of convex fans to visibility graphs of terrains.

\begin{theorem}\label{thm:cfit}
	A graph $G$ is the visibility graph of a convex fan $F$  if and only if for some universal vertex $v$ of $G$, $G-v$ is a visibility graph of a terrain $T$. 
\end{theorem}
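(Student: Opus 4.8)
The plan is to realize the equivalence through a single projective transformation that sends the kernel vertex to a vertical point at infinity, so that the ``fan seen from its kernel'' becomes a terrain. The enabling structural fact is that a convex fan is \emph{star-shaped} from its kernel vertex $k$: since $k$ sees every point of the closure, every ray from $k$ into the polygon meets the boundary chain $C$ (the boundary minus the two edges incident to $k$) exactly once, and the directions pointing into the polygon fill precisely the interior angle at $k$, which is less than $180^\circ$. Hence, after a rotation, the whole polygon lies in the open upper half-plane $\{y>0\}$ with $k$ at the origin, and the non-kernel vertices occur in strictly monotone angular order around $k$.

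First I would set up the map $\Phi(x,y)=(x/y,\,1/y)$ on $\{y>0\}$. It is a homeomorphism onto $\{v>0\}$, it carries lines to lines (hence segments to segments), it sends each ray from the origin to a vertical line, and along such a line ``farther from $k$'' corresponds to ``smaller $v$.'' I would then prove the core \textbf{visibility lemma}: under $\Phi$ the chain $C$ maps to an $x$-monotone polygonal chain $T$ (monotone angle becomes monotone $x$), and the closure of $F$ maps to the region on or above $T$ within the relevant $x$-range. Consequently, for two non-kernel vertices $p,q$, the chord $pq$ stays inside $F$ if and only if the segment $\Phi(p)\Phi(q)$ stays on or above $T$, which---because $T$ is piecewise linear---happens exactly when no vertex of $T$ lies on or above it. This is precisely terrain visibility, so visibility among the non-kernel vertices of $F$ agrees with visibility in $T$.

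With the lemma in hand, the forward direction is immediate: the kernel $k$ sees everything, so it is a universal vertex $v$ of $G$, and $G-v$ is the visibility graph of the terrain $T=\Phi(C)$. For the converse I would run $\Phi^{-1}(u,v)=(u/v,\,1/v)$. Given a universal vertex $v$ of $G$ with $G-v$ realized by a geometric terrain $T$, I translate $T$ so that all its vertices have positive second coordinate and apply $\Phi^{-1}$, obtaining a chain that is star-shaped about the origin $k$ and whose angular span is a compact subinterval of $(0,\pi)$, hence of length less than $\pi$. Closing this chain with the two segments from $k$ to its endpoints yields a polygon $F$; I would check it is a genuine \emph{convex fan} by verifying (i) simplicity, which follows because $\Phi^{-1}$ is a homeomorphism on the half-plane and the two closing edges lie on the extreme bounding rays, so they meet $C$ only at its endpoints; (ii) that the interior angle at $k$ equals the angular span, hence is less than $180^\circ$; and (iii) that $k$ is a kernel, since the enclosed region is exactly the star region up to $C$. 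Applying the visibility lemma in reverse shows that the visibility graph of $F$ equals $G$, after identifying the two universal vertices $k$ and $v$.

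The main obstacle, I expect, lies in the converse's geometric bookkeeping rather than in the transformation itself: once $\Phi$ is in place the forward direction is essentially mechanical, but proving that the reconstructed $F$ is a simple polygon whose kernel sees the \emph{entire} closure (not merely the vertices) requires the star-shapedness argument together with the angular-span bound. One must also confirm that general position is preserved---automatic for $\Phi$ and $\Phi^{-1}$ since projective maps preserve collinearity, up to an arbitrarily small perturbation should the given terrain be degenerate---and handle the ``or on the segment'' boundary cases uniformly, which general position makes routine.
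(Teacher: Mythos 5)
Your proposal is correct and takes essentially the same approach as the paper: your map $\Phi(x,y)=(x/y,\,1/y)$ is exactly the coordinate form of the paper's central projection between two planes through a point $p$ chosen so that the kernel vertex is a vanishing point, turning rays from the kernel into vertical lines, angular order into $x$-order, and star-shapedness into $x$-monotonicity, with visibility preserved because collinearity and betweenness are preserved away from the vanishing line. If anything, your explicit epigraph lemma and your checks in the converse direction (simplicity of the closed polygon, interior angle at $k$ equal to the angular span, and the kernel seeing the entire closure) spell out details that the paper's Claims~\ref{clm:ftot} and~\ref{clm:ftvis} treat more tersely via synthetic properties of central projections.
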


One of the consequences of Theorem \ref{thm:cfit} is that the \textit{recognition} and \textit{reconstruction} problems (defined below) of convex fans and terrains are equivalent under polynomial time reductions. The next theorem captures this fact. 

\begin{theorem}\label{thm:rreq}
	The recognition and reconstruction problems for convex fans are equivalent, under polynomial-time reductions, to its respective problems for terrains.
\end{theorem}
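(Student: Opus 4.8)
The plan is to use Theorem \ref{thm:cfit} as a dictionary that translates convex-fan instances into terrain instances and back, in both the recognition and the reconstruction settings. Throughout, I rely on one structural observation: if $u$ and $w$ are both universal vertices of a graph $G$, then the transposition swapping $u$ and $w$ and fixing every other vertex is an automorphism of $G$, since a universal vertex is adjacent to all others and exchanging two such vertices preserves every adjacency. Consequently $G-u$ and $G-w$ are isomorphic, so whether ``$G$ minus a universal vertex is a terrain visibility graph'' does not depend on which universal vertex is removed. I also use that every convex-fan visibility graph has at least one universal vertex, namely the image of the kernel vertex, which by definition sees every other vertex.

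For recognition I would give many-one reductions in both directions. To reduce terrain recognition to convex-fan recognition, map an input graph $H$ to the graph $G = H + v$ obtained by adjoining a fresh vertex $v$ adjacent to all of $V(H)$. If $H$ is a terrain visibility graph then $v$ is universal in $G$ and $G-v=H$, so Theorem \ref{thm:cfit} makes $G$ a convex-fan visibility graph; conversely, if $G$ is a convex-fan visibility graph then by Theorem \ref{thm:cfit} some universal vertex $u$ of $G$ has $G-u$ a terrain visibility graph, and the swap observation lets me replace $u$ by $v$, so that $H=G-v$ is a terrain visibility graph. For the reverse reduction, given $G$ I first locate a universal vertex in polynomial time; if none exists I output a fixed negative instance, and otherwise I hand $G-v$ to the terrain-recognition oracle. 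Correctness is again Theorem \ref{thm:cfit} together with the independence from the chosen universal vertex.

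For reconstruction I would invoke the explicit constructions underlying the two directions of Theorem \ref{thm:cfit}, which I expect to run in polynomial time: one that adjoins a kernel vertex to a terrain to produce a convex fan, and one that deletes the kernel of a convex fan to produce a terrain, each realizing the corresponding visibility graph. To reconstruct a convex fan for a convex-fan visibility graph $G$, I would find a universal vertex $v$, call the terrain-reconstruction oracle on $G-v$ to obtain a terrain $T$, and apply the adjoin-a-kernel construction to $T$ to obtain a convex fan realizing $G$. To reconstruct a terrain for a terrain visibility graph $H$, I would form $G=H+v$, call the convex-fan-reconstruction oracle to obtain a convex fan $F$ whose visibility graph is $G$, and delete its kernel. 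Since the kernel may correspond to a universal vertex $u$ other than $v$, the resulting terrain a priori realizes $G-u$; I would then relabel its vertices along the isomorphism $G-u \cong G-v = H$ furnished by the swap observation, which changes only the names attached to the unchanged geometry and hence yields a terrain realizing $H$.

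The step I expect to be most delicate is precisely this bookkeeping about universal vertices. The reductions would be immediate if the adjoined vertex $v$ were forced to be the kernel of every realizing convex fan, but $G=H+v$ can have several universal vertices, and both Theorem \ref{thm:cfit} and the oracles only promise that \emph{some} universal vertex behaves correctly. The swap-automorphism observation is what removes this ambiguity, and I would isolate it as a small lemma and invoke it uniformly. The remaining work is to confirm that the two constructions implicit in the proof of Theorem \ref{thm:cfit}, adjoining and deleting the kernel, are genuinely polynomial-time and map geometric realizations as claimed; this is what upgrades the recognition equivalence to a reconstruction equivalence.
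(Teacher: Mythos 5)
Your proposal is correct and follows the same overall architecture as the paper: both directions reduce via Theorem~\ref{thm:cfit}, adding a universal vertex $v$ to pass from terrains to convex fans, removing one to go back, and using the central projection $f$ and its inverse from the proof of Theorem~\ref{thm:cfit} to convert reconstructed geometric objects (these are indeed polynomial-time computable, as you anticipated). The genuine difference is how the two arguments handle the ambiguity that $G$ may have several universal vertices while Theorem~\ref{thm:cfit} only speaks of \emph{some} universal vertex. The paper brute-forces this: in the fan-to-terrain direction it forms $G_i = G - v_i$ for \emph{every} universal vertex $v_i$ and queries the terrain oracle on each, accepting if any call accepts. You instead prove the swap observation --- transposing two universal vertices is an automorphism, so $G-u \cong G-w$ for any two universal vertices $u,w$ --- which reduces this to a single oracle call and is a cleaner (and marginally more efficient) way to establish the same correctness. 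Your lemma also does real work in the terrain-reconstruction direction that the paper elides: the convex fan returned by the oracle on $G' = H + v$ realizes $G'$ only up to isomorphism, so its kernel vertex may correspond to a universal vertex $u \neq v$, and deleting it yields a terrain realizing $G' - u$ rather than $H = G' - v$ directly; the paper silently applies $f$ and declares victory, whereas you explicitly relabel along the isomorphism $G'-u \cong G'-v$ furnished by the swap automorphism. So your route is not merely equivalent but slightly more careful at the one point where the paper's write-up is loose; isolating the swap observation as a standalone lemma, as you suggest, would be a worthwhile addition to the paper's proof as well.
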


The \textit{recognition} problem for visibility graphs asks, for a given input graph $G$, to decide whether $G$ is the visibility graph of some polygon (or a terrain). Similarly, the \textit{reconstruction} problem concerns algorithms that outputs a polygon (terrain) whose visibility graph is isomorphic to a visibility graph given as an input. The only known result about the computational complexity of these problems are by Everett~\cite{vgr}, who showed visibility graphs reconstruction is in PSPACE. So it is natural to consider restricted versions of them, by restricting the input graph.

Before we proceed with the proof of Theorem~\ref{thm:cfit}, in Section~\ref{sec:pfthm}, we prove a few properties of central projections (defined below) in the next section. In Section~\ref{sec:rrp}, we prove Theorem~\ref{thm:rreq}.

\section{Brief discussion on central projections}
\label{sec:cp}

\begin{figure}
	\centering
	\label{fig:cpjbp}
	\includestandalone[scale=.8]{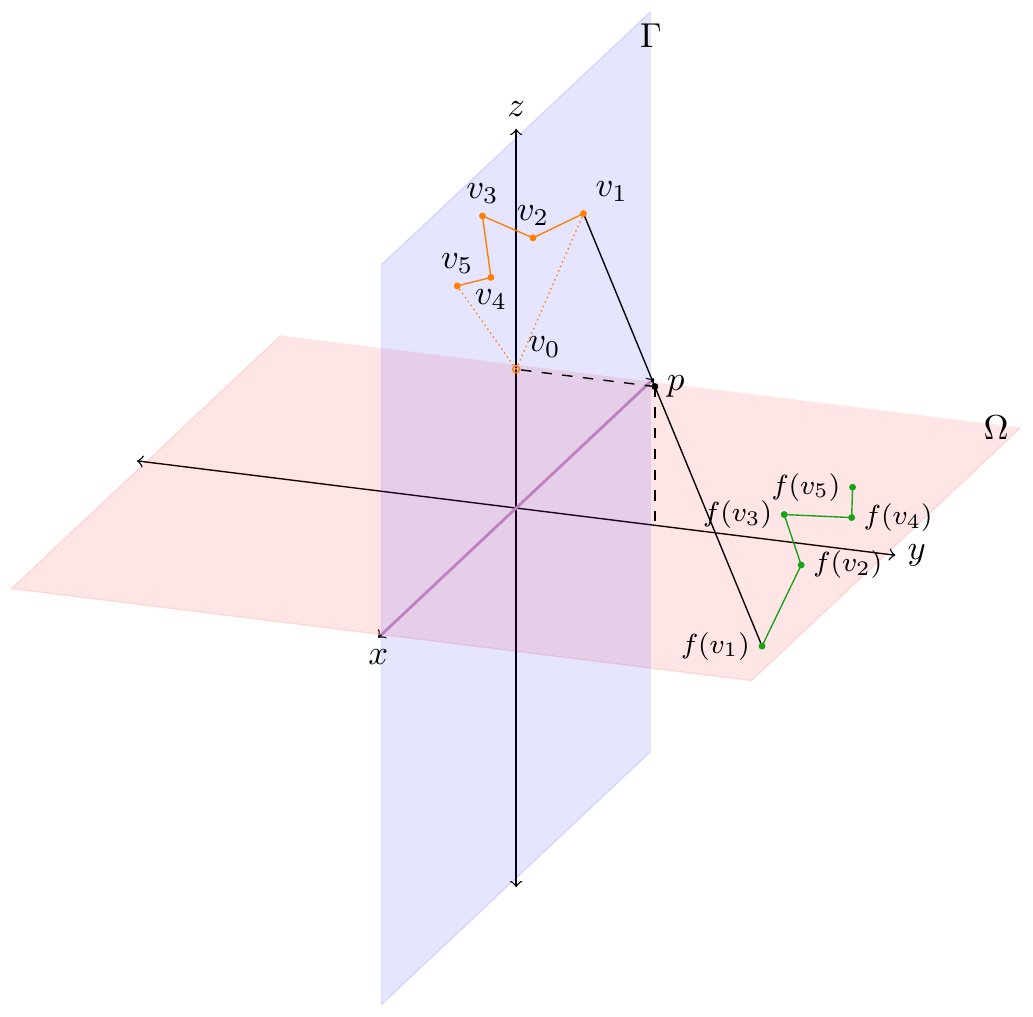}
	\caption{A central projection between $\Gamma$ and $\Omega$.}
\end{figure}

In this section we highlight some properties of central projections (defined below) --- a concept borrowed from projective geometry. The results and concepts here are by no means new nor groundbreaking, but they serve as a foothold for the proof of Theorem \ref{thm:cfit}.

A \textit{central projection} $f$ between planes $\Gamma$ and $\Omega$ in $\mathbb{R}^3$ through a point $p$ not in $\Gamma$ nor $\Omega$ is a function that maps a point $a \in \Gamma$ to the intersection between the line $pa$ and $\Omega$ (see Figure~\ref{fig:cpjbp}). Naturally, for some points $v$ of $\Gamma$, the line $vp$ may be parallel to $\Omega$. In this case, we name $v$ a \textit{vanishing point} of $\Gamma$. There exists only one plane parallel to $\Omega$ that contains $p$. Any line parallel to $\Omega$ containing $p$ is in this plane. Thus, its intersection with $\Gamma$ contains all the vanishing points of $\Gamma$. This implies that all of them are collinear, forming a \textit{vanishing line}. Note that $\Omega$ also has vanishing points, that is, for some point $u$ of $\Omega$, $pu$ is parallel to $\Gamma$, and thus, also contains a vanishing line. 

If $l$ is a line in $\Gamma$, then the images of all points in $l$ under $f$ are in a line $m$ of $\Omega$. To see this, let $\Lambda$ be the unique plane containing $p$ and $l$. For any point $x$ of $l$, the line $px$ is in $\Lambda$. Thus, $f(x)$ lies on the intersection of $\Lambda$ and $\Omega$, which is a line. %

We note that $f$ is an homeomorphism between $\Gamma - l$ and $\Omega -m$ where $l$ and $m$ are vanishing lines of $\Gamma$ and $\Omega$, respectively. Thus $f$ has an continuous inverse $f^{-1}$ that is also a central projection. The following facts are fairly straightforward consequences of this.

\begin{corollary}\label{cor:hpthp}
	Let $f$ be a central projection between the planes $\Gamma$ and $\Omega$ with $l$ and $m$ as its respective vanishing lines. Then, an open half-plane of $\Omega$ bounded by $l$ is mapped onto an open half-plane bounded by $m$. 
\end{corollary}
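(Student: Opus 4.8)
The plan is to deduce the corollary directly from the fact, established just above, that $f$ restricts to a homeomorphism between $\Gamma - l$ and $\Omega - m$, combined with a short connectedness argument. The only extra topological input I need is the standard fact that a line in a plane separates it into exactly two connected open half-planes. Here $l$ lies in $\Gamma$ and $m$ lies in $\Omega$, and the content of the statement is that $f$ carries each open half-plane of $\Gamma$ bounded by $l$ onto an open half-plane of $\Omega$ bounded by $m$.

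First I would observe that $l$ partitions $\Gamma - l$ into its two open half-planes $H_1$ and $H_2$, each of which is convex and therefore connected, while $m$ partitions $\Omega - m$ into two open half-planes $K_1$ and $K_2$. Since $f$ is continuous on $\Gamma - l$ and $H_1$ is connected, its image $f(H_1)$ is a connected subset of $\Omega - m = K_1 \cup K_2$. But $K_1$ and $K_2$ are disjoint open sets whose union is $\Omega - m$, so any connected subset must lie entirely inside one of them; hence $f(H_1)$ is contained in $K_1$ or in $K_2$, and the same holds for $f(H_2)$.

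Next I would upgrade these inclusions to equalities using that $f$ is a bijection from $H_1 \cup H_2$ onto $K_1 \cup K_2$. If $f(H_1)$ and $f(H_2)$ both landed in the same half-plane, say $K_1$, then $f(H_1 \cup H_2)$ would miss $K_2$, contradicting surjectivity; so the two half-planes go to different ones, say $f(H_1) \subseteq K_1$ and $f(H_2) \subseteq K_2$. Surjectivity then forces equality: any $y \in K_1$ has a preimage $x$, and $x \in H_2$ would give $y = f(x) \in K_2$, which is impossible, so $x \in H_1$ and $y \in f(H_1)$. Thus $f(H_1) = K_1$ and, symmetrically, $f(H_2) = K_2$, which is exactly the claimed mapping of half-planes bounded by one vanishing line onto half-planes bounded by the other.

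I expect no serious obstacle here, since everything rests on the homeomorphism already in hand; the only points demanding slight care are not stopping at an inclusion $f(H_i) \subseteq K_j$ but invoking bijectivity to secure the word \emph{onto}, and explicitly ruling out the degenerate possibility that both half-planes collapse into the same component of $\Omega - m$.
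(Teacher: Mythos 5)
Your proof is correct and takes the same route the paper intends: the paper offers no explicit proof, stating only that the corollary is a ``fairly straightforward consequence'' of the homeomorphism between $\Gamma - l$ and $\Omega - m$, and your connectedness-plus-bijectivity argument (components map onto components) is precisely the natural way to fill in that assertion. You also correctly read past the typo in the statement ($l$ bounds half-planes of $\Gamma$, not $\Omega$).
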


\begin{corollary}\label{cor:pib}
	Let $a$, $b$ and $c$ be collinear points of a plane $\Gamma$ and let $f$ be a central projection from $\Gamma$ to another plane. If $b$ is between $a$ and $c$ in $\Gamma$, but no vanishing point is, then $f(b)$ is also between $f(a)$ and $f(c)$, and no vanishing point is. 
\end{corollary}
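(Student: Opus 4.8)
The plan is to rely on the homeomorphism structure of $f$ established just above, not on any coordinate computation. Let $L$ be the line through $a$, $b$, $c$ in $\Gamma$ and let $S$ denote the closed segment with endpoints $a$ and $c$, so that $b \in S$ by hypothesis. By the fact recorded just before Corollary~\ref{cor:hpthp}, $f$ maps every point of $L$ (where it is defined) into a single line $m'$ of $\Omega$; in particular $f(a)$, $f(b)$, $f(c)$ are collinear, so betweenness among the images is meaningful. Recall also that $f$ restricts to a homeomorphism from $\Gamma - l$ onto $\Omega - m$, where $l$ and $m$ are the vanishing lines of $\Gamma$ and $\Omega$. The first step is to verify that $S \subseteq \Gamma - l$.

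This is where the hypothesis enters. The line $L$ meets the vanishing line $l$ in at most one point, namely the unique vanishing point $v$ lying on $L$. Because $f(a)$, $f(b)$, $f(c)$ are assumed to exist, the points $a$, $b$, $c$ are themselves ordinary (non-vanishing), so $v \notin \{a,c\}$; and the assumption that no vanishing point lies between $a$ and $c$ says exactly that $v$, if it lies on $L$ at all, is not in the open segment $(a,c)$. Together these give $S \subseteq L - \{v\} \subseteq \Gamma - l$, so $f$ is defined, continuous, and injective on all of $S$ and carries it into the line $m'$.

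Now I would transport the order through $f$ by monotonicity. A continuous injection of a segment into a line is strictly monotone: by the intermediate value theorem an injective continuous map on an interval cannot reverse direction. Parametrizing $S$ so that $a$, $b$, $c$ appear in that order and composing with $f$, monotonicity forces $f(a)$, $f(b)$, $f(c)$ to appear along $m'$ either in the same order or in the exactly reversed order; in both cases $f(b)$ lies between $f(a)$ and $f(c)$, which is the first conclusion.

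For the second conclusion, note that $f(S) \subseteq \Omega - m$ simply because the image of $f$ avoids the vanishing line $m$. Hence the whole image segment avoids $m$, so no vanishing point of $\Omega$ lies on it, and in particular none lies between $f(a)$ and $f(c)$. The only delicate point in the argument is reducing the hypothesis to the clean statement that $S$ avoids $l$; once that is in hand, the result follows from the homeomorphism property together with the elementary monotonicity of continuous injections of intervals, and I expect this monotonicity step to be the one worth stating with care, since it is what actually moves the order relation across $f$.
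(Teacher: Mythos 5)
Your proof is correct and follows exactly the route the paper intends: Corollary~\ref{cor:pib} is stated there as a ``fairly straightforward consequence'' of the fact that $f$ restricts to a homeomorphism between $\Gamma - l$ and $\Omega - m$ and maps collinear points to collinear points, and your argument (segment avoids $l$, then strict monotonicity of a continuous injection of an interval into a line) is precisely the fleshed-out version of that. The only difference is that you supply the details the paper leaves implicit, including the correct observation that the image of $f$ avoids $m$, which settles the second conclusion.
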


\begin{figure}
	\centering
	\label{fig:pib}
	\includestandalone[scale=1.2]{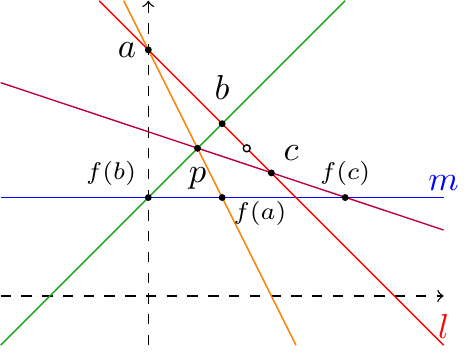}
	\caption{While $b$ is in between $a$ and $c$, $f(b)$ is not between $f(a)$ and $f(b)$.}
\end{figure}

Corollary \ref{cor:pib} may not be true if some vanishing point is between $a$ and $b$. See Figure~\ref{fig:pib} for a counter-example. 

\begin{lemma}\label{lem:mvppl}
	Let $f$ be a central projection from $\Gamma$ to $\Omega$ and let $n$ be the vanishing line of $\Gamma$. For lines $m$ and $l$ of $\Gamma$, their images are in a pair of parallel lines only if either both are parallel to $n$ or the meet at a vanishing point. 
\end{lemma}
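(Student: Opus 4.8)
The plan is to lift everything to the projection point $p$ and argue in $\mathbb{R}^3$. Let $\Pi$ be the unique plane through $p$ parallel to $\Omega$, so that $n = \Pi \cap \Gamma$ by the discussion above. For a line $k$ of $\Gamma$, write $\Lambda_k$ for the plane spanned by $p$ and $k$; as already observed, $f(k)$ lies on the line $\Lambda_k \cap \Omega$. First I would check that $\Lambda_m \neq \Lambda_l$ whenever $m \neq l$: if the two planes coincided, that plane would meet $\Gamma$ in a single line containing both $m$ and $l$ (it cannot equal $\Gamma$, since $p \notin \Gamma$), which is impossible. Hence $r := \Lambda_m \cap \Lambda_l$ is a line, and since both planes contain $p$, so does $r$.

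Two coplanar lines are parallel exactly when they fail to meet, so the images $f(m)$ and $f(l)$—which lie on $\Lambda_m \cap \Omega$ and $\Lambda_l \cap \Omega$—are contained in parallel lines of $\Omega$ precisely when $\Lambda_m$, $\Lambda_l$ and $\Omega$ share no common point, i.e. when $r \cap \Omega = \emptyset$. As $r$ passes through $p \notin \Omega$, this says $r$ is parallel to $\Omega$, which (because $p \in r$) is equivalent to $r \subseteq \Pi$. So the lemma reduces to analysing when $r \subseteq \Pi$. The bridge back to $\Gamma$ is the identity $r \cap \Gamma = \Lambda_m \cap \Lambda_l \cap \Gamma = m \cap l$, using $\Lambda_k \cap \Gamma = k$.

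I would then split on whether $m$ and $l$ meet in $\Gamma$. If they meet at a point $q$, then $r$ is the line $pq$, and $r \subseteq \Pi$ holds iff $q \in \Pi$, i.e. iff $q \in \Pi \cap \Gamma = n$—that is, $m$ and $l$ meet at a vanishing point. If instead $m \parallel l$ in $\Gamma$, then $r \cap \Gamma = \emptyset$, so within $\Lambda_m$ the coplanar lines $r$ and $m$ cannot meet and are therefore parallel; thus $r$ is the line through $p$ carrying the common direction of $m$ and $l$. Here $r \subseteq \Pi$ forces that direction into the direction plane of $\Pi$, hence (being also a direction of $\Gamma$) into the direction of $\Pi \cap \Gamma = n$, so both $m$ and $l$ are parallel to $n$. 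Either branch yields one of the two permitted conclusions, establishing the stated ``only if''.

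The main obstacle is bookkeeping in the degenerate configurations rather than any deep idea: one must verify that $\Lambda_m$ and $\Lambda_l$ are genuinely distinct, justify the equivalence between $r \parallel \Omega$ and $r \subseteq \Pi$ from $p \in r$, and carry out the direction argument in the parallel case without slips. Care is also needed because $f$ is only a homeomorphism off the vanishing lines, so each image is the line $\Lambda_k \cap \Omega$ with its vanishing point deleted; since this deletion does not change which line of $\Omega$ contains the image, the parallelism statement is unaffected.
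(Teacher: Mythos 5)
Your proof is correct, and it takes a genuinely more unified route than the paper's. The paper argues by cases on the configuration of $l$ and $m$ inside $\Gamma$: if they meet at a non-vanishing point, Corollary~\ref{cor:pib} gives that the image lines meet; if they share a vanishing point $v$, it constructs the same two planes you call $\Lambda_l$ and $\Lambda_m$ and observes that their intersection $pv$ is parallel to $\Omega$; and if both lines are parallel to $n$, it invokes the half-plane Corollary~\ref{cor:hpthp} to conclude each image line is parallel to the vanishing line of $\Omega$. Your single incidence criterion --- the carrying lines $\Lambda_m \cap \Omega$ and $\Lambda_l \cap \Omega$ are parallel iff $r = \Lambda_m \cap \Lambda_l \subseteq \Pi$ --- subsumes all of this, and, more importantly, it closes a case the paper's analysis passes over: to prove the stated ``only if'' one must also show that when $m \parallel l$ in $\Gamma$ but neither is parallel to $n$, the images are \emph{not} parallel, and the paper's three cases (meet at a non-vanishing point; meet at a vanishing point; both parallel to $n$) are not exhaustive for this contrapositive, whereas your direction argument ($r$ carries the common direction of $m$ and $l$, and $r \subseteq \Pi$ forces that direction into that of $n$; otherwise $r$ meets $\Omega$ in a common point of the two carrying lines) handles it, in fact yielding the full biconditional. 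What the paper's route buys in exchange is economy, reusing the corollaries already established in Section~\ref{sec:cp} instead of redoing incidence geometry in $\mathbb{R}^3$. Two small points you could make explicit: the carrying lines cannot coincide (their common line would be $r$, which contains $p \notin \Omega$), which is needed for ``parallel $\Leftrightarrow$ disjoint''; and $\Lambda_k \cap \Omega$ is a line unless $k = n$ itself, a degenerate case excluded because the lemma presumes the images exist.
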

\begin{proof}
	Recall that collinear points in $\Gamma$ are mapped into collinear points in $\Omega$. We shall name $\overline{f(l)}$ and $\overline{f(m)}$ the lines that contains the images of the points in $l$ and $m$, respectively. If $l$ meets $m$ in a non-vanishing point, then $\overline{f(l)}$ and $\overline{f(m)}$ meet at a non-vanishing point, as a consequence of Corollary \ref{cor:pib}.
	
	Suppose $l$ and $m$ have a vanishing point $v$ in common. Consider the unique plane $\Lambda$ that contains $l$ and $p$. Any line from $p$ and a point of $l$ must be in $\Lambda$, thus the intersection of $\Lambda$ and $\Omega$ is exactly $\overline{f(l)}$. Similarly for the plane $\Sigma$ containing both $m$ and $p$. As $\Lambda$ and $\Sigma$ both contain $p$ and $v$, the line $pv$ is the exact intersection of $\Lambda$ and $\Sigma$. Any point belonging to both $\overline{f(l)}$ and $\overline{f(m)}$ has to be on this line, however this line is parallel to $\Omega$. Thus, we conclude that $\overline{f(l)}$ and $\overline{f(m)}$ cannot meet in $\Omega$ and therefore are parallel. 

	Now suppose $l$ and $m$ are parallel to $n$. Then $l$ is entirely contained in a half-plane of $\Gamma$ bounded by $n$. Similarly for $m$. By Corollary~\ref{cor:hpthp}, $\overline{f(l)}$ is also contained in a half-plane of $\Omega$ bound by the vanishing line of $\Omega$. This means that $\overline{f(l)}$ is parallel to it. Similarly for $\overline{f(m)}$, so they are parallel to each other. 

\end{proof}

\section{Proof of Theorem \ref{thm:cfit}}
\label{sec:pfthm}

\begin{proof}
Let $F = (v_0,...,v_{n-1})$ be a convex fan in general position where $v_0$ is a convex kernel vertex. We show that there exists a central projection $f$ that maps the polygonal chain $(v_1, \allowbreak v_2, \allowbreak ..., \allowbreak v_{n-1})$ to a terrain $T$ while maintaining visibility between its vertices.

We assume that $\mathbb{R}^3$ has the standard basis. Let $\Gamma$ and $\Omega$ be the $x=0$ and $z=0$ planes. Let $p$ be any point with positive $z$ and $x$ coordinates\footnote{Save to show that $T$ is $x$-monotone, the proof does not rely on these particular planes or point. If $\Gamma$ and $\Omega$ are two meeting planes and $p$ is a point not in them, then $T$ would be monotone to some line.} Let $f$ be the central projection from $\Gamma$ to $\Omega$ through $p$. See Figure~~\ref{fig:cpjbp} for a visual aid. 

We assume $v_0$ is the orthogonal projection of $p$ onto $\Gamma$. This makes $v_0$ a vanishing point of $\Gamma$ with relation to $f$. Additionally, we make sure that the rest of $F$ is drawn above the vanishing line of $\Gamma$. This is possible because $v_0$ is a convex vertex.

\begin{claim}\label{clm:ftot}
	 The polygonal chain $T=(f(v_1),f(v_2),...,f(v_{n-1}))$ is a terrain.
\end{claim}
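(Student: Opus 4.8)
The plan is to show that $T = (f(v_1), \ldots, f(v_{n-1}))$ is an $x$-monotone polygonal chain, i.e.\ a terrain. Since $f$ maps lines to lines and, by Corollary~\ref{cor:pib}, preserves betweenness along segments that avoid the vanishing line, the image of each edge $v_iv_{i+1}$ is a genuine segment $f(v_i)f(v_{i+1})$ in $\Omega$; the fact that the whole chain lies strictly above the vanishing line of $\Gamma$ is exactly what guarantees no edge crosses $l$, so the betweenness hypothesis of Corollary~\ref{cor:pib} holds on every edge. Thus $T$ is a well-defined polygonal chain, and the only substantive thing left to verify is $x$-monotonicity: that the vertices $f(v_1), \ldots, f(v_{n-1})$ have strictly increasing (or strictly decreasing) $x$-coordinates, and that $T$ does not double back.

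First I would analyze where the $x$-monotonicity comes from geometrically. The vanishing line $l$ of $\Gamma$ is the intersection of $\Gamma$ with the plane through $p$ parallel to $\Omega$; since $\Omega$ is the plane $z=0$, this parallel plane is a horizontal plane $z = z_p$, and $l$ is the horizontal line $\{z = z_p\}$ inside $\Gamma = \{x=0\}$. The chain $F$ lives above $l$, i.e.\ in the half-plane $z > z_p$ of $\Gamma$. I would argue that $f$ sends this open half-plane onto one open half-plane of $\Omega$ bounded by $\Omega$'s vanishing line (Corollary~\ref{cor:hpthp}), and that the relevant coordinate along the vertices varies monotonically. Concretely, each point $a = (0, y, z)$ with $z > z_p$ is carried to $f(a)$ on the line $pa$ where it meets $z=0$; writing $p = (x_p, y_p, z_p)$ with $x_p, z_p > 0$, I would parametrize the line $pa$ and solve for the parameter at which the $z$-coordinate vanishes, then read off the $x$-coordinate of $f(a)$ as an explicit function of $(y,z)$. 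The key structural point is that the ordering of the $v_i$ along the chain corresponds to a monotone traversal, and because $v_0$ is the kernel vertex seeing all of $F$, the vertices $v_1, \ldots, v_{n-1}$ are encountered in angular order as seen from $v_0$; after projection this angular order becomes the left-to-right order of $x$-coordinates.

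The main obstacle, which I expect to require the most care, is ruling out that $T$ folds back on itself — i.e.\ that $f$ does not reverse the order of two vertices or send two of them to the same $x$-coordinate, which would violate $x$-monotonicity. The cleanest way to handle this is to exploit the kernel property: every vertex of $F$ is visible from $v_0$, so the rays from $v_0$ through the $v_i$ have distinct directions and sweep monotonically as $i$ increases, because otherwise two edges of $F$ would be crossed by some ray from the kernel, contradicting that $v_0$ sees the whole closure. Each such ray from $v_0 = $ (the foot of the perpendicular from $p$) corresponds under $f$ to a family of parallel lines or to a fixed direction in $\Omega$ by Lemma~\ref{lem:mvppl} (lines through the vanishing point $v_0$ map to parallel lines in $\Omega$); translating the monotone angular sweep at $v_0$ into a monotone progression of $x$-coordinates in $\Omega$ is the heart of the argument. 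I would finish by checking the footnote's claim that the particular choice of $\Gamma$, $\Omega$, and $p$ makes the monotonicity direction the $x$-axis, and confirming that general position of $F$ passes to general position (no three collinear vertices) of $T$, so that $T$ is a legitimate terrain.
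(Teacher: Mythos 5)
Your proposal is correct and follows essentially the same route as the paper: the chain lies above the vanishing line so Corollaries~\ref{cor:hpthp} and~\ref{cor:pib} give a well-defined simple chain, and $x$-monotonicity comes from Lemma~\ref{lem:mvppl} (lines through the vanishing point $v_0$ map to parallel lines, one of which is parallel to the $y$-axis by the choice of $\Gamma$, $\Omega$, $p$) combined with the kernel property that each ray from $v_0$ meets the chain exactly once. Your angular-sweep phrasing and the optional explicit parametrization are just a more computational dressing of the paper's observation that every line through $v_0$ intersects the chain solely at one point, whence every vertical line meets $T$ at most once.
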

\begin{proof}
	Since, with the exception of $v_0$, the entirety of $F$ is contained in one of the open half-planes of $\Gamma$ bound by its vanishing line, by Corollary \ref{cor:hpthp}, all the edges of $F$ that are not incident with $v_0$ do not contain a vanishing point. 

	If any two edges $f(v_i)f(v_{i+1})$ and $f(v_j)f(v_{j+1})$ of $T$ share a point between their ends, then, by Corollary \ref{cor:pib}, so does $v_iv_{i+1}$ and $v_jv_{j+1}$. Thus $T$ is a simple polygonal line. It remains to show that $T$ is $x$-monotone.

	Due to the general position hypothesis, for any point $a$ of the polygonal chain $(v_1,...,v_n)$, the line $v_0a$ intersects it solely at $a$. 

	By Lemma \ref{lem:mvppl}, the lines that meet at $v_0$ in $\Gamma$ are mapped into lines that are pairwise parallel in $\Omega$. To conclude that $T$ is $x$-monotone, it suffices to show that one of them is parallel to the $y$-axis in $\Omega$. 
	
	Consider the unique plane that contains $v_0$ and $p$ and is also orthogonal to both $\Gamma$ and $\Omega$. The intersection of this plane and $\Omega$ is a line orthogonal to $\Gamma$, and thus parallel to the $y$-axis, that contains the image of some line that goes through $v_0$ in $\Gamma$. 

\end{proof}

Recall that $f^{-1}$ is also a central projection. A consequence of the previous claim is that $F = (v_0,\allowbreak f^{-1}(v_1), \allowbreak f^{-1}(v_2), \allowbreak ..., \allowbreak f^{-1}(v_{n-1}))$, that is, we may use $f^{-1}$ to obtain a convex fan from a terrain. Therefore, to complete the proof of Theorem \ref{thm:cfit} it suffices to show the following:

\begin{claim}\label{clm:ftvis}
	For $i \in \{1,...,n\}$, the vertices $v_i$ and $v_j$ see each other in $F$ if and only if $f(v_i)$ and $f(v_j)$ see each other in $T$. 
\end{claim}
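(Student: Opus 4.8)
The plan is to establish the visibility equivalence by characterizing visibility in both $F$ and $T$ in terms of a common geometric condition on the preimage configuration in $\Gamma$, then showing the central projection $f$ preserves exactly that condition. The key observation is that visibility in each setting reduces to a statement about whether the open segment between two vertices meets the rest of the chain, but the two settings differ in a crucial way: in the convex fan $F$, the obstruction is the polygonal boundary (including the edges incident to $v_0$), whereas in the terrain $T$, visibility requires that no vertex lies \emph{on or above} the connecting segment. So first I would make precise what ``$v_i$ sees $v_j$ in $F$'' means: since $v_0$ is a kernel vertex seeing everything, and the chain $(v_1,\dots,v_{n-1})$ together with the two edges $v_0v_1$ and $v_0v_{n-1}$ bounds the closure of $F$, the segment $v_iv_j$ stays inside $F$ if and only if it does not cross the chain and lies on the interior side of the boundary. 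Because everything except $v_0$ sits in one open half-plane bounded by the vanishing line, and $v_0$ is the vanishing point itself, the central projection sends the ``rays from $v_0$'' to a pencil of parallel vertical lines, which is precisely the structure that turns ``inside the fan'' into ``below the terrain''.

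Next I would carry out the reduction in both directions simultaneously by tracking, for each pair $v_i, v_j$, the finite set of potential blockers, namely the other vertices $v_k$ and the segment $v_iv_j$ itself. Using Corollary~\ref{cor:pib}, I would argue that the betweenness relation along any line through the three relevant points is preserved by $f$ away from vanishing points, so a vertex $v_k$ blocking the segment $v_iv_j$ in $\Gamma$ corresponds exactly to $f(v_k)$ blocking $f(v_i)f(v_j)$ in $\Omega$. The central technical point is to translate ``blocks visibility in $F$'' into the terrain's ``above the segment'' condition. Here I would use that the direction ``toward $v_0$'' in $\Gamma$ maps, under $f$, to the vertical (the $y$-axis) direction in $\Omega$, as established in Claim~\ref{clm:ftot}. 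Consequently a point being on the $v_0$-side of the segment $v_iv_j$ (i.e.\ the exterior side of the fan near that chord) corresponds to being \emph{above} the image segment in $T$, and a point on the far side corresponds to being \emph{below}. This is what converts the two-sided ``inside the polygon'' criterion into the one-sided ``nothing on or above'' terrain criterion, since the fan's interior is exactly the side away from $v_0$.

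Concretely, I would first show that $v_i$ and $v_j$ see each other in $F$ if and only if no vertex $v_k$ of the chain lies on the $v_0$-side of the line through $v_i v_j$ within the relevant span, equivalently, the triangle $v_0 v_i v_j$ contains no other vertex of the chain in the appropriate region; this is the standard characterization of visibility in a star-shaped (kernel-containing) polygon. Then I would show that applying $f$ converts this to: no image vertex $f(v_k)$ lies on or above the segment $f(v_i)f(v_j)$, which is the definition of terrain visibility. The orientation-preservation statement of Corollary~\ref{cor:hpthp} (open half-planes bounded by vanishing lines map to open half-planes) is what guarantees that ``same side'' is preserved, and the parallel-pencil statement of Claim~\ref{clm:ftot} is what identifies the $v_0$-side with the ``up'' direction. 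Handling the edge cases where $v_k$ is collinear with $v_i, v_j$ (ruled out by general position) or where the segment $v_iv_j$ passes close to the vanishing line requires care, but general position keeps all relevant points off the vanishing line.

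The main obstacle I anticipate is making the ``$v_0$-side equals above'' correspondence fully rigorous rather than merely intuitive: one must verify that for \emph{every} relevant chord the sidedness determined by $v_0$ in $\Gamma$ matches the global up/down direction in $\Omega$ uniformly, and that no vanishing point sneaks between the endpoints to invalidate the betweenness transfer of Corollary~\ref{cor:pib}. Since the entire chain lies in a single open half-plane off the vanishing line and $v_0$ is the unique vanishing point involved, I expect this to go through, but the bookkeeping about which open half-plane (and hence whether a blocker lies above or below) must be pinned down carefully, and that is where I would spend the bulk of the argument.
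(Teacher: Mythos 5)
Your overall strategy is sound and, at its core, parallels the paper's: both arguments hinge on the facts that lines through $v_0$ (a vanishing point) map to a pencil of parallel vertical lines (established in Claim~\ref{clm:ftot} via Lemma~\ref{lem:mvppl}) and that betweenness and half-plane sidedness are preserved away from the vanishing line (Corollaries~\ref{cor:pib} and~\ref{cor:hpthp}). Where you differ is the bookkeeping. You insert an intermediate combinatorial lemma --- $v_i$ and $v_j$ see each other in $F$ if and only if no other chain vertex lies in the open triangle $v_0v_iv_j$ --- and then transfer that vertex-level condition through $f$. The paper instead argues pointwise: for every interior point $b$ of the chord $v_iv_j$, the kernel property puts the whole segment $v_0b$ inside $F$, and since points between $v_0$ and the chain map above $T$, the image chord lies strictly above every boundary point of $T$; the converse comes for free by running the same argument on the central projection $f^{-1}$. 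Your iff-chain is arguably cleaner, but note that the empty-triangle criterion you call ``standard'' is not free: you must prove that a chain edge crossing the chord forces a vertex strictly inside the triangle (the edge can exit the triangle only back through the chord, because by the kernel property and general position the boundary can neither cross nor touch $v_0v_i$ or $v_0v_j$). The paper's pointwise argument sidesteps this small lemma entirely.

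The one substantive error is the sidedness you assign, which is exactly the point you flagged as needing care. The fan's interior near the chord $v_iv_j$ is the side \emph{toward} $v_0$: when $v_i$ and $v_j$ see each other, the entire triangle $v_0v_iv_j$ lies in the closure of $F$, precisely because $v_0$ sees every point of the chord. Correspondingly, since $v_0$ is the vanishing point, points between $v_0$ and the chain map \emph{above} the terrain, so ``inside the fan'' becomes ``above $T$'' --- not ``below the terrain'' as your first paragraph asserts --- and your glosses that the $v_0$-side of the chord is ``the exterior side of the fan'' and that ``the fan's interior is exactly the side away from $v_0$'' are both backwards. Fortunately your operative correspondence ($v_0$-side of the chord $\leftrightarrow$ above the image segment) is the correct one, and with the glosses fixed the reduction does go through: a blocking vertex $v_k$ in the triangle lies between $v_0$ and the point $s$ where the ray $v_0v_k$ meets the chord, so $f(v_k)$ lies vertically above $f(s)$, a point of the segment $f(v_i)f(v_j)$, which is precisely the terrain obstruction. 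Pin down this orientation once (as the paper does with its choice of $p$ and the half-plane containing the chain) and your argument is complete.
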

\begin{proof}
	A consequence of Corollary \ref{cor:hpthp}, is that no vanishing point is in the segment $v_iv_j$, similarly for $f(v_i)f(v_j)$.
	
	Let $v_i$ and $v_j$ be a pair of vertices that see each other in $F$. If $v_iv_j$ is an edge of $F$, we are done. Thus, suppose not. 
	
	Due to the general position hypothesis, for any point $b$ in the segment $v_iv_j$, $b$ is not a point on the boundary of $F$. Let $c$ be any point in the segment $bv_0$.

	Note that the ray $v_0b$ intercepts the boundary of $F$ at a point $a$. Thus $c$ is in the segment $v_0a$ and $f(c)$ is above $T$. 
	
	By Corollary~\ref{cor:pib}, $f(b)$ is also not a point on the boundary of $T$. Pick a point $d$ in the segment $cv_0$. Thus, $c$ is between $b$ and $d$ and so $f(c)$ is between $f(b)$ and $f(d)$. If $f(c)$ is a point on the boundary $T$ then so is $c$ a point on the boundary of $F$. This either contradicts the fact that $F$ is in general position (i.e. $c$ is a vertex) or the fact that $v_0$ sees $b$.

	This means that no point on the boundary of $T$ is above or in the segment $f(v_i)f(v_j)$, and we conclude that $f(v_i)$ and $f(v_j)$ see each other.

	Since $f^{-1}$ is also a central projection. A similar argument, shows that if $f(v_i)$ and $f(v_j)$ see each other, then so are $f^{-1}(f(v_i))=v_i$ and $f^{-1}(f(v_j))=v_j$. 
\end{proof}
\renewcommand{\qedsymbol}{$\blacksquare$}
\end{proof} 

\section{Recognition and reconstruction problems}
\label{sec:rrp}

In this section, we prove Theorem \ref{thm:rreq}. 

\begin{proof}

Let $f$ be the central projection described in Section~\ref{sec:pfthm}. We first reduce the reconstruction and recognition problems of convex fans to the similar problems for terrains.

Let $G$ be an arbitrary graph. If $G$ has no universal vertex then $G$ is not a visibility graph of a convex fan and we stop. For each universal vertex $v_i$ of $G$, let $G_i:=G-v_i$. For supply each $G_i$ as an input for an algorithm that recognizes terrains. If any of them is accepted, then we accept $G$. Theorem \ref{thm:cfit} guarantees $G$ will be accepted if and only if one of the $G_i$ will.

For the reconstruction problem, supply each $G_i$ to an algorithm that reconstructs terrains. Theorem~\ref{thm:cfit} guarantees that at least one of them will output a terrain whose visibility graph is isomorphic to the input if and only if $G$ is a visibility graph of a convex fan. We now use $f^{-1}$ a in Claim~\ref{clm:ftot} to obtain a convex fan whose visibility graph is isomorphic to $G$. 

We now make the reductions in the other direction. Let $G$ be an arbitrary graph and let $G'$ be obtained from $G$ by adding a universal vertex $v$. For the reconstruction problem, we supply $G'$ as an input for an algorithm that recognizes visibility graphs of convex fans and accept $G$ if $G'$ is also. Again Theorem~\ref{thm:cfit} shows that $G'$ will be accepted if and only if $G$ is accepted. 

For the reconstruction problem. We supply $G'$ to an algorithm that reconstructs convex fans. Theorem~\ref{thm:cfit} will guarantee that $G'$ will output a convex fan. We then use $f$ to obtain a terrain as in Claim~\ref{clm:ftot}. 

\end{proof}

\printbibliography

\end{document}